\documentclass{amsart}
\usepackage{hyperref}
\usepackage{amssymb}
\usepackage[foot]{amsaddr}
\usepackage{amsmath,amsthm}
\usepackage{hyperref}
\usepackage{paralist}
\usepackage{comment}
\usepackage{tikz}
\usetikzlibrary{positioning,shapes,shadows,arrows}
\hypersetup{
pdftitle={Faster algorithms  for vertex partitioning problems
  parameterized by clique-width}, 
pdfauthor={Sang-il Oum, Sigve Hortemo S\ae{}ther, Martin Vatshelle}
}

\usepackage{color}

\newcommand{\compl}[1]{\overline{#1}}
\newcommand{\bigoh}[1]{\mathcal{O}\hspace{-1pt}\left(#1\right)}
\newcommand{\bigohstar}[1]{#1\poly}
\newcommand{\ark}{$\mathbb{Q}$-cut-rank}
\newcommand{\arkw}{$\mathbb{Q}$-rank-width}
\newcommand{\Rrw}{\operatorname{rw}_\mathbb{Q}}%
\newcommand{\rw}{\operatorname{rw}}
\newcommand{\crk}{\operatorname{cutrk^{\mathbb{Q}}}}
\newcommand{\nec}{nec}
\newcommand{\necw}[1][d]{$\nec_{#1}$-width}
\newcommand{\cut}[1]{(#1,\compl{#1})}

\newcommand{\cw}{\operatorname{cw}}
\newcommand{\mult}{\cdot} 

\newcommand{\braced}[1]{\left\{#1\right\}}
\newcommand\abs[1]{\lvert #1\rvert}

\newcommand{\cclass}[1]{\textnormal{\textsf{#1}}}
\newcommand{\pname}[1]{\textnormal{\textsc{#1}}}
\newcommand{\msol}[1]{$MSOL_{1}$}

\newtheorem{theorem}{Theorem}[section]
\newtheorem{lemma}[theorem]{Lemma}
\newtheorem{open}{Open Problem}

\newtheorem{corollary}{Corollary}[theorem]

\newcommand{\poly}[1][n]{{#1}^{\bigoh{1}}}

\begin{document}

\title[Faster algorithms parameterized by clique-width]%
{Faster Algorithms %
  for Vertex Partitioning Problems
  Parameterized by Clique-width}

\author{Sang-il Oum}
\address[Oum]{Department of Mathematical Sciences, KAIST, Daejeon, South Korea}
\author{Sigve Hortemo S\ae{}ther}
\author{Martin Vatshelle}
\address[S\ae{}ther and Vatshelle]{Department of Informatics, University of Bergen, Norway}
\thanks{The first author is supported by Basic Science Research
  Program through the National Research Foundation of Korea (NRF)
  funded by  the Ministry of Science, ICT \& Future Planning
  (2011-0011653).
The second and third authors are supported by the Research Council of Norway.}
\email{sangil@kaist.edu}
\email{sigve.sether@ii.uib.no}
\email{vatshelle@ii.uib.no}

\begin{abstract}
 Many \cclass{NP}-hard problems, such as \textsc{Dominating Set}, are
 \cclass{FPT} parameterized by clique-width.
 For graphs of clique-width $k$ given with a $k$-expression, 
\textsc{Dominating Set} can be solved in $4^k \poly$ time. 
 However, no \cclass{FPT} algorithm is known for computing an optimal $k$-expression.
 For a graph of clique-width $k$, if we rely on known algorithms to
 compute a $(2^{3k}-1)$-expression via rank-width and then solving \textsc{Dominating Set} using the $(2^{3k}-1)$-expression, 
 the above algorithm will only give  %
 a runtime of $4^{2^{3k}} \poly$.
 There have been results which overcome this exponential jump; the best
 known algorithm can solve \textsc{Dominating Set} in time
 $2^{\bigoh{k^2}} \poly$ by avoiding constructing a $k$-expression 
 [Bui-Xuan, Telle, and Vatshelle.
\newblock Fast dynamic programming for locally checkable vertex subset and
  vertex partitioning problems.
\newblock {\em Theoret. Comput. Sci.}, 2013.
\newblock doi:
  \href{http://dx.doi.org/10.1016/j.tcs.2013.01.009}{10.1016/j.tcs.2013.01.009}].
 We improve this to $2^{\bigoh{k\log k}}\poly $. Indeed, 
 we show that for a graph of clique-width $k$, a large class of domination and partitioning problems (LC-VSP), 
 including \textsc{Dominating Set}, can be solved in $2^{\bigoh{k\log{k}}} \poly$.
 Our main tool is a variant of rank-width using the rank of a $0$-$1$ matrix over the rational field instead of the binary field.
\end{abstract}
\keywords{clique-width, parameterized complexity, dynamic programming,
  generalized domination, rank-width}

\maketitle

\section{Introduction} \label{sec:intro}

 Parameterized complexity is a field of study dedicated to solving \cclass{NP}-hard problems efficiently on restricted inputs, and has grown to become a well known field over the last 20 years.
 Especially the subfields of Fixed Parameter Tractable (\cclass{FPT}) algorithms and kernelizations have attracted the interest of many researchers.
 Parameterized algorithms measure the runtime in two parameters; the input size $n$ and a secondary measure $k$ (called a parameter, either given as part of the input or being computable from the input).
 An algorithm is \cclass{FPT} if it has runtime $f(k) \poly$. Since we study \cclass{NP}-hard problems, we must expect that $f(k)$ is exponentially larger than $n$ for some instances. However, a good parameter is one where $f(k)$ is polynomial in $n$ for a large class of inputs.
 For a survey on parameterized complexity and \cclass{FPT}, we refer the reader to \cite{flum2006parameterized,downey1999parameterized,niedermeier2006invitation}.

 The \emph{clique-width} of a graph $G$, introduced by Courcelle and Olariu~\cite{CO2000}, is the minimum $k$ such that $G$ can
 be expressed by a $k$-expression, where a $k$-expression is an algebraic
 expression using the following four operations:
 \begin{itemize}
 \item $i(v)$: construct a graph consisting of a single vertex with label $i\in\{1,2,\ldots,k\}$.
 \item $G_1 \oplus G_2$: take the disjoint union of labelled graphs $G_1$ and $G_2$.
 \item $\eta_{i,j}$ for distinct $i,j\in \{1,2,\ldots,k\}$: add an
   edge between each vertex  of label $i$ and each vertex of label $j$.
 \item $\rho_{i \to j}$ for $i,j\in\{1,2,\ldots,k\}$: relabel each
   vertex of label $i$ to $j$.
 \end{itemize}
 Clique-width is a well-studied parameter in parameterized complexity
 theory. It is therefore interesting to be able to expand our
 knowledge on the parameter and to improve on the preciseness of
 problem complexity when parameterizing by clique-width.

 Courcelle, Makowsky, and
 Rotics~\cite{courcelle2000linear} showed that, for an input graph of
 clique-width at most $k$, every problem expressible in \msol1
 (monadic second-order logic of the first kind) can be solved in
 \cclass{FPT} time parameterized by $k$ if a
 $k$-expression %
 for the graph %
 is given together with the input graph.  Later, Oum and
 Seymour~\cite{oum2006approximating} gave an algorithm to find a
 $(2^{3k+2}-1)$-expression of a graph having clique-width at most $k$
 in time $2^{3k} \poly$.\footnote{Later, Oum~\cite{Oum2006} obtained an improved algorithm to find a
 $(2^{3k}-1)$-expression of a graph having clique-width at most $k$ in
 time $2^{3k}\poly$.}
 By combining these results, we deduce that
 for an input graph of clique-width at most $k$, every \msol1 problem
 is in \cclass{FPT}, even if a $k$-expression is not given as an input.
 However the dependency in $k$ is huge and can not be considered of
 practical interest.  In order to increase the practicality of
 \cclass{FPT} algorithms, it is very important to control the runtime
 as a function of $k$.

 If we rely on finding an approximate $k$-expression first and then doing dynamic programming on the obtained $k$-expression, we have two ways to make improvements; either we improve the algorithm that uses the $k$-expression, or we find a better approximation for clique-width.
 Given a $k$-expression, \textsc{Independent Set} and \textsc{Dominating Set} can be solved in time $2^k \poly$~\cite{Gurski08} and $4^k \poly$~\cite{BLRV10}, respectively. 
 Lokshtanov, Marx and Saurabh \cite{LMS11} show that unless the Strong ETH fails\footnote{%
  The Strong Exponential Time Hypothesis (Strong ETH) states that
  \textsc{SAT} can not be solved in $\bigoh{(2-\epsilon)^n}$ time for
  any constant $\epsilon > 0$. Here $n$ denotes the number of
  variables.
}, \textsc{Dominating Set} can not be solved in $(3-\epsilon)^k \poly$
time even if a $k$-expression is given\footnote{Their proof uses pathwidth, but the statement holds since clique-width is at most 1 higher than pathwidth.}.
 Hence, there is not much room for improvement in the existing algorithms when a $k$-expression is given.

 There are no known \cclass{FPT} algorithms for computing optimal
 $k$-expressions, and the best known \cclass{FPT} algorithm for
 approximating an optimal $k$-expression via rank-width has an approximation ratio which is exponential in the
 optimal clique-width~\cite{Oum2006}. Therefore, even for the simple
 \cclass{NP}-hard problems such as \textsc{Independent Set} and
 \textsc{Dominating Set}, all known algorithms following this
 procedure has a runtime where the dependency is double exponential in
 the clique-width.  The question of finding a better approximation algorithm for
 clique-width is an important and challenging open problem.

 However, there is a way around this by avoiding a $k$-expression: Bui-Xuan, Telle and
 Vatshelle~\cite{BTV09_hjoin} showed that by doing dynamic programming
 directly on a rank decomposition, \textsc{Dominating Set} can be
 solved in $2^{k^2} \poly$ for graphs of clique-width $k$.
 Their algorithm with  a runtime of $2^{\bigoh{k^2}}  \poly$ is not only for \textsc{Independent Set} and \textsc{Dominating
   Set} but also for a wide range of problems, called the \emph{locally checkable vertex subset and partitioning problems} (LC-VSP problems).
 Tables \ref{sigma-rho-table} and \ref{Dq-table} list some well known problems in LC-VSP.

\begin{table}
  \centering  
  \begin{tabular}{c|l}   \hline
    $d(\pi)$ & Standard name \\ 
    \hline %
    $d$       &  \pname{$d$-Dominating set}  \\
    $ d + 1 $ & \pname{Induced $d$-Regular Subgraph}  \\
    $d$       &  \pname{Subgraph of Min Degree $\geq d$}  \\
    $ d + 1 $ & \pname{Induced Subgraph of Max Degree $\leq d$} \\
    $2$       & \pname{Strong Stable set} or \pname{2-Packing} \\
    $2$       & \pname{Perfect Code} or \pname{Efficient Dominating set} \\
    $2$       & \pname{Total Nearly Perfect set}  \\
    $2$       & \pname{Weakly Perfect Dominating set} \\
    $2$       & \pname{Total Perfect Dominating set} \\
    $2$       & \pname{Induced Matching} \\ 
    $2$       & \pname{Dominating Induced Matching} \\
    $2$       & \pname{Perfect Dominating set} \\
    $1$       & \pname{Independent set} \\ 
    $1$       & \pname{Dominating set} \\
    $1$       & \pname{Independent Dominating set} \\
    $1$       & \pname{Total Dominating set}   \\ 
    \hline
  \end{tabular}
  \caption{A table of some vertex subset properties whose
    optimization problems belong to LC-VSP. 
    The meaning of the problem specific constant $d(\pi)$ is discussed in subsection \ref{subsec:LC-VS}.}
  \label{sigma-rho-table}
\end{table}%

\begin{table}
  \centering

\begin{tabular}{c|l}
 \hline
 $d(\pi)$ & Standard name\\ 
 \hline %
 $1$ & \pname{$H$-coloring} or \pname{$H$-homomorphism}\\
 $1$ & \pname{$H$-role assignment} or \pname{$H$-locally surjective homomorphism}\\
 $2$ & \pname{$H$-covering} or \pname{$H$-locally bijective homomorphism}\\
 $2$ & \pname{$H$-partial covering} or \pname{$H$-locally injective homomorphism}\\
 \hline
\end{tabular}
  \caption{A table of some homomorphism problems in LC-VSP for fixed simple graph $H$.
    These are expressible with a degree constraint matrix $D_q$ where $q(\pi)=\abs{V(H)}$. 
    The meaning of $D_q$, $d(\pi)$ and $q(\pi)$ is explained in
    subsection \ref{subsec:LC-VS}.}
\label{Dq-table}

\end{table}

 In this paper we improve on these results by using a slightly
modified definition of rank-width, 
called \emph{\arkw}, based on the rank function over the
rational field instead of the binary field.
 The idea of using fields other than the binary field for rank-width
 was investigated earlier in \cite{KR12}, but our work is the first to
 use \arkw{} to speed up an algorithm.

 We will show the following:
 \begin{itemize}
  \item For any graph, its \arkw{} is no more than its clique-width.
  \item There is an algorithm to find a decomposition confirming that
    \arkw{} is at most $3k+1$ for graphs of \arkw{} at most $k$ in
    time $\bigohstar{2^{3k}}$.
  \item If a graph has \arkw{} at most $k$, then %
    every fixed LC-VSP problem can be solved in
    $\bigohstar{2^{\bigoh{k\log k}}}$-time.
 \end{itemize}
 This allows us to construct an algorithm that runs in time
 $\bigohstar{2^{\bigoh{k\log k}}}$ for graphs of clique-width at most
 $k$ and solve every fixed LC-VSP problem, improving the previous runtime $2^{\bigoh{k^2}}\poly$
of the
 algorithm by Bui-Xuan et al.~\cite{BTV13}.

 We also relate the parameter \arkw{} to other existing parameters.
 There are several factors affecting the quality of a parameter, such as: 
  Can we compute or approximate the parameter? Which problems can we solve in \cclass{FPT}\cclass{} time? Can we reduce the exponential dependency in the parameter for specific problems? And, how large and natural is the class of graphs having a bounded parameter value?

 This paper is organized as follows: 
 In Section~\ref{sec:framework} we introduce the main parts of the framework used by Bui-Xuan et al.~\cite{BTV13}, including the general algorithm they give for LC-VSP problems.
 Section~\ref{sec:our_parameter} revolves around \arkw\ and is where the results of this paper reside.
 We show how \arkw\ relates to clique-width, and reveal why we have a good \cclass{FPT} algorithm for approximating a decomposition.
 In Section~\ref{sec:relation_Qrw_nec}, we give our main result, which is an improved upper bound on solving LC-VSP problems parameterized
 by clique-width when we are not given a decomposition.
 We end the paper with Section~\ref{sec:conclusion} containing some concluding
 remarks and open problems.

\section{Framework}\label{sec:framework}

We write $V(G)$ and $E(G)$ to denote the set of
 vertices and edges, respectively, of a graph $G$.
 For $A \subseteq V(G)$, let  $\overline{A}=V(G)\setminus A$.
 For a vertex $v \in V(G)$, let $N_G(v) $ be the set of all neighbours
 of $v$ in $G$. We  omit the subscript if  it is clear from the context.
 For a set $S \subseteq V(G)$ we define $N(S) = \bigcup_{v \in S} N(v) \setminus S$.

\subsection{Branch Decompositions}
\label{subsec:branch-decompositions}

 The algorithm of Bui-Xuan et al.~\cite{BTV13} needs a branch decomposition as input.
  A \emph{branch decomposition} $(T, \delta)$ of a graph $G$ consists of a subcubic tree $T$ (a tree of maximum degree $3$) and a bijective function $\delta$ from the leaves of $T$ to the vertices of $G$. (Note that this definition differs from that of \cite{robertson1991graph} by $\delta$ mapping to the vertices of $G$ instead of the edges.)

 Every edge in a tree splits the tree into two connected components.
 In a branch decomposition $(T, \delta)$ for a graph $G$, we say that
 each edge $e$ of $T$ induces a \emph{cut} in $G$.
 This induced cut is a bipartition $(A,\compl{A})$ of the vertices of $V(G)$ so that $A$ is the set of vertices mapped by $\delta$ from vertices of one component of $T - e$, and $\overline{A}$ is the set of vertices mapped by $\delta$ from the other component of $T-e$ (see Figure~\ref{fig:a_cut}).

 \begin{figure}[h!]
   \centering
   \includegraphics{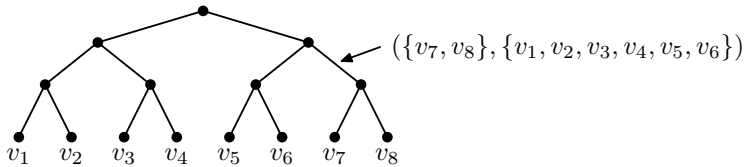}
   \caption{A branch decomposition $(T, \delta)$ of a graph with $8$
     vertices. The leaves of $T$ is mapped by $\delta$ to each of the
     vertices. Each edge in $T$ induce a cut.}
   \label{fig:a_cut}
 \end{figure}

 A function $f:2^{V(G)} \rightarrow \mathbb{R}$ is called a \emph{cut-function} if it is symmetric, that is $f(A)= f(V(G) \setminus A)$ for all $A\subseteq V(G)$.  As
 there is a bijection from each subset $A \subseteq V(G)$ to each cut
 $\cut{A}$, we may abuse notation slightly and say that $f$ is also
 defined for cuts of $V(G)$, by regarding $f\cut{A}$ as $f(A)$.

 Given a cut-function $f$ and a branch decomposition $(T, \delta)$ of a graph $G$, 
 \begin{itemize}
  \item the \emph{$f$-width} of $(T, \delta)$ is the maximum value of $f$ over all the cuts of $(T, \delta)$, and
  \item the \emph{$f$-width} of $G$ is the minimum $f$-width over all possible branch decompositions of $G$.
  \end{itemize}
If $\lvert V(G)\rvert\le 1$, then $G$ admits no branch decomposition and we
define its $f$-width to be $f(\emptyset)$.

Many width parameters of graphs can be defined in terms of $f$-width for some cut-function $f$. For example, in a graph $G$, if we define
$f(A)$ to be the number of maximal independent sets of the subgraph of $G$ induced by edges having one end in $A$ and the other end in $\compl{A}$,
then the $f$-width is exactly the boolean-width~\cite{DBLP:journals/tcs/Bui-XuanTV11}.

 When we speak of the $f$-width of a graph, we address it as a \emph{width parameter} of the graph.

\subsection{Neighbourhood Equivalence}
\label{subsec:nec_d}

Two sets of vertices $S_1,S_2$ are \emph{neighbourhood equivalent} if
they have the same set of neighbours, in other words,  $N(S_1) = N(S_2)$.
 We are particularly interested in neighbourhood equivalence in
 bipartite graphs, or more specifically, cuts defined by a branch decomposition.
 This concept was generalized with respect to cuts in~\cite{BTV13}.
 We define the \emph{$d$-neighbour equivalence} relation $\equiv_A^d$, and use this to define the parameter $\nec_d$. 

 For a cut $\cut{A}$ of a graph $G$, and a positive integer $d$, two
 subsets $X, Y \subseteq A$ are \emph{$d$-neighbour equivalent}, $X
 \equiv_A^d Y$, over $\cut{A}$ if:
  \[
   \text{for each vertex } v \in \compl{A}, \quad 
   \min{\braced{d,\; \lvert N(v) \cap X\rvert }} = 
   \min{\braced{d,\; \lvert N(v) \cap Y\rvert}}.
  \]

  \begin{figure}[h!]
    \centering
    \includegraphics{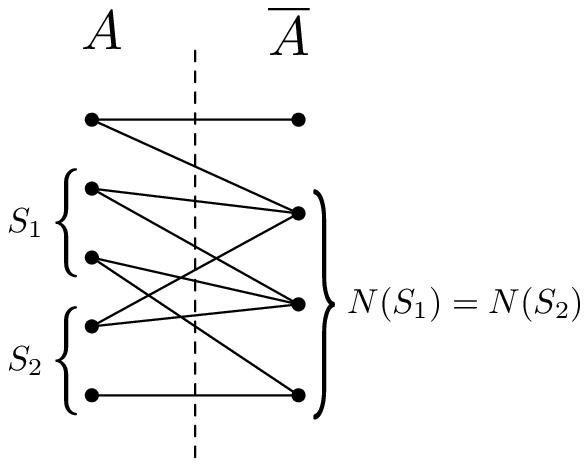}
    \caption{The sets $S_1$ and $S_2$ are neighbourhood equivalent over $\cut{A}$. That is, $S_1 \equiv_A^1 S_2$. However, in this example it is not the case that $S_1 \equiv_A^2 S_2$.}
    \label{fig:neighbouhoodEquivalence}
  \end{figure}

 The \emph{number of $d$-neighbour equivalence classes}, $\nec_d(A)$, is the number of equivalence classes of $\equiv_A^d$ over $\cut{A}$.

 In other words, $X \equiv_A^d Y$ over the cut $\cut{A}$ if each vertex in $\compl{A}$ is either adjacent to at least $d$ vertices in both $X$ and $Y$, 
 or is adjacent to exactly the same number of vertices in $X$ as in $Y$.
 The algorithm in \cite{BTV13} uses this relation to limit the number
 of partial solutions to try.
 Therefore, the runtime is dependent on the number of $d$-neighbour equivalence classes.

\subsection{Locally Checkable Vertex Subset and Vertex Partitioning Problems} \label{subsec:LC-VS}

Telle and Proskurowski~\cite{DBLP:journals/siamdm/TelleP97} introduced 
the \emph{Locally Checkable Vertex Subset and Vertex
  Partitioning problems} (LC-VSP), 
 also called $[\sigma, \rho]$-problems and $D_q$-partition problems.
 This is a framework to describe many well-known
 graph problems, see~\cite{DBLP:journals/siamdm/TelleP97,BTV13}. 
 Tables \ref{sigma-rho-table} and \ref{Dq-table} list
 some of them. For completeness, we give the definitions of the
 problem class LC-VSP, however, they are not used directly in this
 paper and can be skipped by the reader.

 For finite or co-finite sets $\sigma$ and $\rho$ of non-negative
 integers, a set~$S$ of vertices of a graph $G$ is a \emph{$[\sigma,
   \rho]$-set} of $G$ if for each vertex $v$ of $G$,
 \[
   \left\lvert N(v) \cap S \right\rvert \in 
   \begin{cases}
     \sigma & \text{if } v \in S, \\
     \rho & \text{if } v \in V(G) \setminus S.
   \end{cases}
 \]
 The \emph{Locally Checkable Vertex Subset problems} (LC-VS), or 
 \emph{$[\sigma,\rho]$-problems}, are those problems
 that consist of finding a minimum or maximum $[\sigma, \rho]$-set of
 the input graph.
 
 The \emph{LC-VSP problems}, or \emph{$D_q$-partition problems}, is a generalization
 of the LC-VS problems.
 A \emph{degree constraint matrix} $D_q$ is a $q \times q$ matrix such
 that each
 cell is a finite or co-finite set of non-negative integers. 
 We say that a
 partition $V_1, V_2, \ldots, V_q$ of $V(G)$
 satisfies $D_q$ if for $ 1 \leq i,j \leq q$,
 the number of neighbours in $V_j$ of a vertex of $V_i$
 is in the set $D_q[i,j]$.
 In other words, 
 \[ 
 \left\lvert N(v) \cap V_j \right\rvert \in D_q[i,j]
 \mbox{ for all $1\le i,j\le q$ and $v\in V_i$.}
 \]
 For a given degree constraint matrix $D_q$, the LC-VSP problem is to decide whether the vertex set of a graph admits a partition satisfying $D_q$.

 For each LC-VSP-problem $\pi$, there are two problem-specific
 constants $d(\pi)$ and $q(\pi)$. The number $q(\pi)$ equals the
 number of parts in a partition that the problem requests, or
 equivalently, the row/column size of the constraint matrix (i.e., for
 problem $\pi$ with degree constraint matrix $D_q$ we have $q =
 q(\pi)$.  The number $d(\pi)$ is defined to be one more than the
 largest number in all the finite sets and in all the complements of
 the co-finite sets of the degree constraint matrix used for
 expressing $\pi$. If all the finite sets and complements of co-finite
 sets are empty, $d(\pi)$ is zero.  

 For example, \textsc{Dominating Set} can be described by a degree
 constraint matrix $D_2$ where $D_2[1,1] = D_2[1,2] = D_2[2,2] =
 \mathbb{N}$ and $D_2[2,1] = \mathbb{N} \setminus \{0\}$, and we ask
 to minimize $\abs{V_1}$. %
 If we alter
 $D_2[1,1]$ to $\{0\}$, the problem is changed to \textsc{Independent Dominating Set}, as no vertex in $V_1$ can be adjacent to another
 vertex in $V_1$. For both problems we have a $2\times
 2$-matrix, and so $q(\pi) = 2$ and $d(\pi) = 0 + 1 = 1$.

 The algorithm of Bui-Xuan et al.~\cite{BTV13} solves each of the
 LC-VSP problems with a runtime dependent on \necw[d(\pi)] and $q(\pi)$
 by using the $d$-neighbour equivalence relation $\equiv_A^{d(\pi)}$.
 
 \begin{theorem}[Bui-Xuan et al.~{\cite[Theorem 2]{BTV13}}] \label{thm:LC-VSgivenNEC} 
   Let $\pi$ be a problem in LC-VSP. For a graph $G$ given 
   with its branch decomposition of \necw[d(\pi)] $k$,
   the problem~$\pi$ can be solved in time 
   $\bigoh{\abs{V(G)}^4  \mult q(\pi)\mult  k^{3q(\pi)}}$.
 \end{theorem}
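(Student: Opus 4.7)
The plan is to perform bottom-up dynamic programming on the given branch decomposition $(T,\delta)$, with a state at each node that records information only up to $d(\pi)$-neighbour equivalence. The key observation is that $d(\pi)$ is chosen so that membership in any entry $D_q[i,j]$ is determined, for counts at least $d(\pi)$, purely by whether $D_q[i,j]$ is finite or co-finite. Hence for any cut $(A,\overline A)$ and any vertex $v\in \overline A$, the truth value of $|N(v)\cap V_j|\in D_q[i,j]$ depends on the subset $V_j\cap A$ only through its $\equiv_A^{d(\pi)}$-class. This is precisely what makes the class count $k$ the right parameter.

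For each node $t$ of $T$, with the edge above $t$ inducing a cut with one side $A$, I would store a table $\operatorname{Tab}_t$ indexed by pairs $(\mathbf R,\mathbf R')$ of tuples $\mathbf R=(R_1,\dots,R_{q(\pi)})$ and $\mathbf R'=(R'_1,\dots,R'_{q(\pi)})$, where each $R_i$ is a fixed representative of a class of $\equiv_A^{d(\pi)}$ and each $R'_i$ of $\equiv_{\overline A}^{d(\pi)}$. The entry $\operatorname{Tab}_t[\mathbf R,\mathbf R']$ stores the optimal value of a partition $V_1\cup\cdots\cup V_{q(\pi)}$ of $A$ such that each $V_i$ is $\equiv_A^{d(\pi)}$-equivalent to $R_i$, and such that every vertex $v\in V_i$ already satisfies $|N(v)\cap V_j|\in D_q[i,j]$ under the assumption that $V_j\cap \overline A$ is given by $R'_j$. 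The table has at most $k^{2q(\pi)}$ entries, and the answer to $\pi$ is read at the root, where $\overline A=\emptyset$ forces $\mathbf R'$ to be the unique tuple of empty representatives.

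At a leaf $\ell$ with $\delta(\ell)=v$, the base case is direct: for each choice of the part to which $v$ belongs and each $\mathbf R'$, one verifies the constraints at $v$ using the capped counts $\min\{d(\pi),|N(v)\cap R'_j|\}$. At an internal node $t$ whose children induce vertex sets $A_L$ and $A_R$ with $A=A_L\cup A_R$, the merge iterates over triples $(\mathbf R^L,\mathbf R^R,\mathbf R')$ ranging over $k^{3q(\pi)}$ possibilities. For each triple, the representative of $R^L_i\cup R^R_i$ under $\equiv_A^{d(\pi)}$ is computed for each $i$, yielding $\mathbf R$; the outside profile seen by the left child, namely the equivalence class of $A_R\cup\overline A$ under $\equiv_{\overline{A_L}}^{d(\pi)}$, is determined by $\mathbf R^R$ together with $\mathbf R'$, and symmetrically for the right child. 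The corresponding child entries are then combined (summed for counting objectives, or checked for feasibility) and the best value is accumulated into $\operatorname{Tab}_t[\mathbf R,\mathbf R']$.

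The main technical obstacle is the bookkeeping needed to translate the tuples $\mathbf R^L$, $\mathbf R^R$, and $\mathbf R'$ into the appropriate representatives at each of the three incident cuts. I would handle this by precomputing, for every edge of $T$, a canonical family of at most $k$ representatives of $\equiv_A^{d(\pi)}$ together with, for each vertex outside the cut, the capped vector of neighbour counts into each representative; this allows each required class lookup via hashing in time polynomial in $|V(G)|$. Each of the $k^{3q(\pi)}$ triples then triggers $O(q(\pi))$ such lookups and an $O(q(\pi))$ constraint-compatibility check. Summing over the $O(|V(G)|)$ nodes of $T$ and absorbing the polynomial factors from precomputation into an overall $O(|V(G)|^4)$ overhead yields the claimed runtime $\bigoh{|V(G)|^4 \mult q(\pi) \mult k^{3q(\pi)}}$. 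Correctness follows by induction on $T$, using that replacing any part $V_i\cap A$ by an $\equiv_A^{d(\pi)}$-equivalent set preserves the truth value of every constraint crossing the cut.
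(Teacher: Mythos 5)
This theorem is not proved in the paper at all: it is imported verbatim from Bui-Xuan, Telle, and Vatshelle~\cite{BTV13} and used as a black box, so there is no in-paper proof to diverge from. Your proposal is a correct reconstruction of the dynamic programming in that reference --- tables indexed by $q(\pi)$-tuples of representatives of $\equiv_A^{d(\pi)}$ and $\equiv_{\compl{A}}^{d(\pi)}$ at each cut, merges over $k^{3q(\pi)}$ triples of classes, and the key invariance that degree constraints crossing a cut depend on a part only through its $d(\pi)$-neighbour equivalence class --- so it is essentially the same approach as the cited proof.
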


\section{\texorpdfstring{\arkw}{Q-rank-width} of a Graph} \label{sec:our_parameter}

The \emph{\ark} function of a graph $G$ is a function on the subsets
of $V(G)$ that maps $X \subseteq V(G)$ to the rank of an $\abs{X} \times
\left\lvert\compl{X}\right\rvert$-matrix $A=(a_{ij})_{i\in X, j\in
  \compl{X}}$ over the rational field such that $a_{ij}=1$ if $i$ and
$j$ are adjacent in $G$ and $a_{ij}=0$ otherwise. We let
\emph{$\crk(X)$} denote the \ark\ of $X$.  For a subset $X \subseteq
V(G)$, the matrix $A$ associated with $\crk(X)$ is the \emph{adjacency
  matrix} of the cut $\cut{X}$.  Note that if the underlying field of
the matrix $A$ is the binary field $GF(2)$, then we obtain the
definition of the usual cut-rank function \cite{oum2006approximating}.
By \emph{\arkw{}} of a graph, we mean its \ark{}-width (see subsection
\ref{subsec:branch-decompositions}). We may denote the \arkw\ simply
as $\Rrw$.

\begin{figure}[!h]
  \centering
  \[
  \bordermatrix{ & & \compl{A} &  &\cr
    & 1 & 1&  0 & 0\cr
    & 0 &  1& 1 & 0 \cr
    A & 0& 0&1&1\cr
    & 0&1&1&0\cr
    &0&0&0&1}
  \]
  \caption{The adjacency matrix of the cut depicted in Figure~\ref{fig:neighbouhoodEquivalence}. The rank over the rational field is $4$, so the \ark\ of this cut is $4$.}
\end{figure}

Since the \ark{} function is symmetric submodular and is computable in
polynomial time, by applying the result of Oum and
Seymour~\cite{oum2006approximating}, we get the following theorem.

\begin{theorem}[Oum and Seymour~\cite{oum2006approximating}]
 \label{thm:approximate-Q-rw} 
 There is a $\bigohstar {2^{3k}}$-time algorithm for which, given a graph $G$ as input and a parameter $k$, either outputs a branch decomposition for $G$ of \arkw{} at most $3k+1$ or confirms that \arkw{} of $G$ is more than $k$.
\end{theorem}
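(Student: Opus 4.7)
The plan is to verify that the \ark\ function $\crk$ meets the hypotheses of the general branch-width approximation algorithm of Oum and Seymour~\cite{oum2006approximating}, and then invoke that algorithm as a black box. Their framework takes as input any symmetric submodular integer-valued set function $f$ on $V(G)$ that (i) can be evaluated in polynomial time and (ii) is bounded on singletons, and produces in time $\bigohstar{2^{3k}}$ either a branch decomposition of $f$-width at most $3k+1$, or a certificate that the $f$-width of $G$ exceeds $k$. So the task reduces to checking these properties for $f=\crk$.

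First I would verify \emph{symmetry}: since the matrix used to define $\crk(V(G)\setminus X)$ is the transpose of the one defining $\crk(X)$, and transposition preserves rank over any field, we get $\crk(X)=\crk(V(G)\setminus X)$. Next, I would verify \emph{submodularity}, i.e., $\crk(X)+\crk(Y)\ge \crk(X\cap Y)+\crk(X\cup Y)$ for all $X,Y\subseteq V(G)$. This is a standard consequence of the submodularity of the matrix rank function over an arbitrary field, applied to the appropriate rearrangement of the bipartite adjacency matrix; the argument is identical to the one used in \cite{oum2006approximating} for the binary field, since it only uses linear algebra over the underlying field.

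For the computational condition, I would note that $\crk(X)$ is the rank over $\mathbb{Q}$ of a $0$-$1$ matrix of dimensions bounded by $\abs{V(G)}$, which is computable in polynomial time by Gaussian elimination (with any standard polynomial bound on bit-complexity since the entries are $0$-$1$). Finally, for the singleton bound, the matrix defining $\crk(\{v\})$ has a single row, hence rank at most $1$, so $f$-width grows in a controlled way on small sets as required by the Oum--Seymour framework.

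Having confirmed these four conditions, the theorem follows by directly applying the Oum--Seymour approximation algorithm to $f=\crk$. The main (minor) obstacle is simply to state the submodularity of $\mathbb{Q}$-rank cleanly; there is no new algorithmic content, since the algorithm is an off-the-shelf invocation. One technical point worth checking explicitly is that the entries of intermediate matrices during Gaussian elimination stay polynomially bounded in bit-length so that the polynomial-time oracle assumption of~\cite{oum2006approximating} is genuinely met; this is standard for $0$-$1$ input matrices over $\mathbb{Q}$.
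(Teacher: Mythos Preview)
Your proposal is correct and matches the paper's approach exactly: the paper simply notes that the \ark{} function is symmetric, submodular, and polynomial-time computable, and then invokes the Oum--Seymour approximation algorithm as a black box. You have filled in the verification of these hypotheses in more detail than the paper does, but the argument is the same.
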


\subsection{$\mathbb Q$-rank-width versus clique-width/rank-width}
 
 The question of how useful the \arkw{} is as a width parameter is hard to answer.
 To better understand this question, it would be interesting to know the relation to other well-known width parameters such as treewidth, rank-width and clique-width.

 The following relates \arkw{} to the closely related parameter rank-width, yet we see that rank-width can be substantially lower than \arkw{}. 
 
\begin{lemma} \label{lemma:NewParamLEQcw}
For any graph $G$ we have $\rw(G) \leq \Rrw(G) \leq \cw(G) \leq 2^{\rw(G)+1}-1.$
\end{lemma}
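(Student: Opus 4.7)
The plan is to prove the three inequalities separately.

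For the rightmost inequality $\cw(G) \leq 2^{\rw(G)+1}-1$, I would simply cite the known construction of Oum that converts a rank decomposition of width $k$ into a $(2^{k+1}-1)$-expression.

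For the leftmost inequality $\rw(G) \leq \Rrw(G)$, it suffices to show pointwise that for every $X \subseteq V(G)$, the $GF(2)$-cut-rank is at most the $\mathbb{Q}$-cut-rank, since then any branch decomposition witnessing $\Rrw(G) \leq k$ also witnesses $\rw(G) \leq k$. Let $M$ be the adjacency matrix of the cut $\cut{X}$ and suppose its $GF(2)$-rank equals $k$. Then there is a $k \times k$ submatrix $N$ whose determinant is nonzero modulo $2$. Since the entries of $N$ are integers, $\det N$ is a nonzero integer, hence nonzero over $\mathbb{Q}$. Thus the $\mathbb{Q}$-rank of $M$ is at least $k$.

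For the middle inequality $\Rrw(G) \leq \cw(G)$, I would take an optimal $k$-expression for $G$ with $k = \cw(G)$ and extract a branch decomposition $(T,\delta)$ by taking its parse tree, smoothing away the unary $\eta_{i,j}$ and $\rho_{i \to j}$ nodes, and keeping the leaves mapped to vertices of $G$. Any edge of $T$ corresponds to a subexpression whose evaluation is a labelled graph with vertex set $A \subseteq V(G)$, giving the cut $\cut{A}$. Partition $A$ into at most $k$ classes $L_1,\dots,L_k$ by the label each vertex has at the root of that subexpression. The key observation is that any two vertices in the same class $L_i$ have identical neighbourhoods in $\overline{A}$: all edges between $A$ and $\overline{A}$ are created by $\eta$ operations at ancestor nodes of the subexpression, and these operations depend only on the current label; since $\rho$ operations relabel by current label, two vertices sharing a label at the subexpression root continue to share a label at every ancestor, so they are made adjacent to exactly the same vertices of $\overline{A}$. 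Consequently the adjacency matrix of $\cut{A}$ has at most $k$ distinct rows, and therefore rank at most $k$ over any field, in particular over $\mathbb{Q}$.

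The main obstacle is really only bookkeeping for the middle inequality, namely verifying the ``same label implies same future neighbourhood'' claim across $\eta$ and $\rho$ operations; once this is in hand, the bound $\crk(A) \leq k$ is immediate from having at most $k$ distinct rows. The other two inequalities are either a direct invocation of a known result (the Oum bound) or a short determinantal argument.
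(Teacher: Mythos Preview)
Your proposal is correct and follows essentially the same approach as the paper: the paper likewise derives the first inequality from the fact that the $GF(2)$-rank of a $0$-$1$ matrix is at most its $\mathbb{Q}$-rank, and for the remaining two inequalities it simply cites \cite[Proposition~6.3]{oum2006approximating}, whose content is precisely the parse-tree-to-branch-decomposition construction you spell out (at most $k$ distinct rows per cut matrix) together with the converse direction yielding the $(2^{\rw+1}-1)$-expression. Your determinantal argument for the first inequality and your explicit label-tracking for the second are just more detailed renderings of the same ideas.
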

\begin{proof}
  The first inequality is from the fact that a set of $0$-$1$ vectors
  linearly dependent over $\mathbb{Q}$ must also be linearly
  dependent over $GF(2)$. 

  The second and third inequalities follow from \cite[Proposition
  6.3]{oum2006approximating} since their proof is not dependent on the
  type of field rank-width uses. They show that a $k$-expression can
  be translated to a branch decomposition where for every cut
  $\cut{A}$ in the decomposition, either the number of distinct rows or
  the number of distinct columns in the adjacency matrix $M$ of its
  induced bipartite graph, is bounded by $k$.  Since this means the
  rank of $M$ over $\mathbb{Q}$ is at most $k$, we have $\Rrw(G) \leq
  \cw(G)$. The idea of showing $\cw(G) \leq 2^{\rw(G)+1}-1$, is that a
  branch decomposition where the adjacency matrix of each cut has its
  number of distinct columns/rows (approximately) bounded by some $k$,
  can be translated to a $k$-expression. As the number of distinct
  columns/rows for any $0$-$1$ matrix of rank $\rw$ is at most
  $2^{\rw}$, we get our inequality. The last two inequalities are also
  proved in~\cite{KR12}.  
\end{proof}

We believe Lemma~\ref{lemma:NewParamLEQcw} is tight. There are
existing results showing that it is almost tight.  A $n \times n$ grid
has rank-width $n-1$~\cite{Jel10} and clique-width $n+1$~\cite{GR00},
hence the first two inequalities are almost tight.  There exist graphs
with treewidth $k$ and hence $\mathbb{Q}$-rank-width at most $ k$ and
clique-width at least $2^{\lfloor k/2 \rfloor-1}$~\cite{CR05}.

\subsection{$\mathbb Q$-rank-width versus treewidth/branch-width}
Oum~\cite{Oum2008} proved that the rank-width of a graph is less than
or equal to its tree-width plus $1$.
We prove a similar result for \arkw{}.
In order to show this, we use the notion of \emph{tangles} and
\emph{branch-width} of symmetric submodular functions, see
\cite{GGRW2003,OS2005}.
For a symmetric submodular function $f$ on a finite set $V$, an
\emph{$f$-tangle} $\mathcal T$ of order $k+1$ is a set
of subsets of $V$ satisfying the following:
\begin{enumerate}[(T1)]
\item\label{t1} For all $A\subseteq V$, 
  if $f(A)\le k$, then either $A\in \mathcal T$ or $V\setminus
  A\in \mathcal T$.
\item\label{t2} If $A, B, C\in\mathcal T$, then $A\cup B\cup C\neq V$.
\item\label{t3} For all $v\in V$, we have $V\setminus\{v\}\notin
  \mathcal T$.
\end{enumerate}
\begin{theorem}[Robertson and Seymour
  {\cite[(3.5)]{robertson1991graph}}; Geelen, Gerards, Robertson, and
  Whittle~{\cite[Theorem  3.2]{GGRW2003}}]\label{thm:tangle} 
  There is no $f$-tangle of order $k+1$
  if and only if 
  the branch-width of $f$ is at most $k$.
\end{theorem}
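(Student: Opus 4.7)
The plan is to argue the two directions separately. The forward direction (a branch decomposition of $f$-width at most $k$ forbids an $f$-tangle of order $k+1$) is a short orientation-and-counting argument on the decomposition tree, while the converse (nonexistence of tangles implies low branch-width) is the substantive content and follows the Robertson--Seymour / Geelen--Gerards--Robertson--Whittle strategy.

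For the forward direction, suppose $(T,\delta)$ is a branch decomposition of $f$-width at most $k$ and, for contradiction, that $\mathcal T$ is an $f$-tangle of order $k+1$. Each edge $e$ of $T$ induces a cut $(A, V\setminus A)$ with $f(A)\le k$, so by (T1) at least one side lies in $\mathcal T$; if both did, setting $B=C=V\setminus A$ in (T2) would give $A\cup B\cup C=V$, a contradiction. Hence exactly one side is in $\mathcal T$; orient $e$ toward that side. At a leaf $\ell$ with $\delta(\ell)=v$, condition (T3) rules out $V\setminus\{v\}\in\mathcal T$, so $\{v\}\in\mathcal T$ and the edge points toward $\ell$. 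At an internal vertex $v$, the three incident edges yield a partition $V=V_1\sqcup V_2\sqcup V_3$ with $V\setminus V_i=\bigcup_{j\neq i}V_j$. A case check on the number of outgoing edges at $v$ shows that every configuration except ``two outgoing, one incoming'' produces three members of $\mathcal T$ whose union is $V$, violating (T2). Hence every internal vertex has in-degree exactly $1$. In a subcubic tree with $\ell$ leaves there are $\ell-2$ internal vertices and $2\ell-3$ edges, so the total in-degree would be $\ell+(\ell-2)=2\ell-2\neq 2\ell-3$, a contradiction.

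For the converse, I would use strong induction on $|V|$. Assuming no $f$-tangle of order $k+1$ exists, I want to build a branch decomposition of $f$-width at most $k$ by locating a cut $(A,V\setminus A)$ of $f$-value at most $k$, recursing on each side for the restricted functions, and gluing along the new cut edge. The needed cut is extracted from the failure of a candidate family $\mathcal T^{\ast}=\{A\subseteq V : f(A)\le k \text{ and } A \text{ is ``large'' in a suitable sense}\}$ to satisfy the tangle axioms: whichever axiom fails exhibits either a small cut that cleanly partitions $V$ or a triple whose union equals $V$, and either outcome drives the recursion. Repeated uncrossing via submodularity, $f(A\cap B)+f(A\cup B)\le f(A)+f(B)$, lets one replace crossing low-width cuts by nested or disjoint ones of no larger $f$-value, so that the cuts appearing in the constructed decomposition form a laminar family realized by a subcubic tree.

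The principal obstacle is this converse, where one must pin down ``large side'' so that the candidate family really does fail the tangle axioms in a way usable for recursion, and verify that uncrossing never inflates $f$-values beyond $k$. This is the technical heart of the Geelen--Gerards--Robertson--Whittle argument (itself adapted from Robertson and Seymour for matroids), and since the paper invokes the theorem only as a black box en route to bounding \arkw{} by tree-width, I would write up the forward direction in detail and cite the original sources for the converse rather than reproduce the full proof.
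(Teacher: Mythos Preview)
The paper does not prove this theorem at all; it is quoted verbatim from Robertson--Seymour and Geelen--Gerards--Robertson--Whittle and used only as a black box in the proof of Lemma~\ref{lem:bw}. So there is nothing in the paper to compare your argument against, and your decision to cite the hard converse is precisely what the paper does for the entire statement.

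Your forward direction is correct. One cosmetic point: the count ``$\ell-2$ internal vertices and $2\ell-3$ edges'' presumes the tree is cubic, not merely subcubic. Either remark that degree-$2$ internal vertices can be suppressed without changing any induced cut, or replace the explicit count by the cleaner observation that if every vertex has in-degree exactly $1$ then the total in-degree equals $\lvert V(T)\rvert$, which cannot equal the $\lvert V(T)\rvert-1$ edges of a tree. Everything else (the use of (T2) with repeats to force a unique orientation, the appeal to (T3) at leaves, and the case analysis at degree-$3$ vertices) is sound.
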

  For a set $X$ of edges, let $T_X$ be the set of vertices 
  incident with at least one of the edges in $X$.
  For a set $X$ of edges, let $\eta(X)=\abs{T_X\cap T_{E(G)\setminus X}}$,
  that is the number of vertices incident to both edges in $X$ and
  edges in $E(G)\setminus X$.
  Then the \emph{branch-width} of a graph $G$ is the branch-width of  the
  function $\eta$ on $E(G)$ \cite{robertson1991graph}.

\begin{lemma}\label{lem:etarho}
  For $X\subseteq E(G)$, 
  we have $\crk(T_X)\le \eta(X)$.
\end{lemma}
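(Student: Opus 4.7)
The plan is to bound the rank of the adjacency matrix of the cut $\cut{T_X}$ by exhibiting many zero rows, one for every vertex of $T_X$ that is not also in $T_{E(G)\setminus X}$.

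First I would set up the matrix. Let $A=(a_{uv})_{u\in T_X,\; v\in \compl{T_X}}$ be the adjacency matrix of the cut $\cut{T_X}$, so that $\crk(T_X)=\operatorname{rank}_\mathbb{Q}(A)$. Set $B=T_X\cap T_{E(G)\setminus X}$, so $\abs{B}=\eta(X)$ by definition.

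The key observation is a simple incidence argument. Suppose $uv\in E(G)$ is an edge with $u\in T_X$ and $v\in \compl{T_X}$. Because $v\notin T_X$, the vertex $v$ is incident to no edge of $X$; in particular, $uv\notin X$, so $uv\in E(G)\setminus X$. But then $u$ is incident to the edge $uv\in E(G)\setminus X$, so $u\in T_{E(G)\setminus X}$, and hence $u\in B$. Therefore every edge crossing the cut $\cut{T_X}$ has its endpoint in $T_X$ lying in $B$. Equivalently, for every vertex $u\in T_X\setminus B$, the entire row $(a_{uv})_{v\in \compl{T_X}}$ of $A$ is the zero vector.

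Consequently $A$ has at most $\abs{B}$ nonzero rows, so $\operatorname{rank}_\mathbb{Q}(A)\le\abs{B}=\eta(X)$, which gives $\crk(T_X)\le\eta(X)$. There is essentially no technical obstacle here; the only thing to be careful about is the direction of the incidence bookkeeping (that vertices outside $T_X$ are untouched by $X$, forcing the crossing edges to sit in $E(G)\setminus X$ and their $T_X$-endpoints to lie in $B$).
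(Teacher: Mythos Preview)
Your proof is correct and follows essentially the same approach as the paper's own proof: both bound $\crk(T_X)$ by observing that at most $\eta(X)$ rows of the cut adjacency matrix can be nonzero. The paper states in one line that ``$T_X$ has at most $k$ vertices having neighbours in $V(G)\setminus T_X$ by the definition of $\eta$,'' whereas you spell out explicitly the incidence argument (an edge crossing the cut cannot lie in $X$, hence its $T_X$-endpoint lies in $T_{E(G)\setminus X}$) that justifies this claim.
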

\begin{proof}
  Suppose $k=\eta(X)$. Then $T_X$ has at most $k$ vertices having
  neighbors in $V(G)\setminus T_X$ by the definition of $\eta$. Thus
  $\crk(T_X)\le k$ as the rank of a matrix with at most $k$ non-zero
  rows is at most $k$.
\end{proof}
\begin{lemma}\label{lem:bw}
  Let $k\ge 2$. If $G$ has \arkw{} at least $k+1$, then 
  $G$ has branch-width at least $k+1$.
\end{lemma}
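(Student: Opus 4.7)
My plan is to prove the contrapositive through the tangle characterization of branch-width (Theorem~\ref{thm:tangle}): starting from a $\crk$-tangle $\mathcal{T}$ of order $k+1$ on $V(G)$, which exists by assumption, I would construct an $\eta$-tangle $\mathcal{T}'$ of order $k+1$ on $E(G)$, giving branch-width at least $k+1$. Discarding isolated vertices (they affect neither parameter), the natural candidate is
\[
  \mathcal{T}' := \{\,X \subseteq E(G) : \eta(X) \le k \text{ and } T_X \in \mathcal{T}\,\},
\]
which is well-defined by Lemma~\ref{lem:etarho}.

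The pivotal technical step will be a \emph{shift lemma} for $\mathcal{T}$: if $A \in \mathcal{T}$ and $u \in V(G)$ with both $A$ and $A \triangle \{u\}$ having $\crk$ at most $k$, then $A \triangle \{u\} \in \mathcal{T}$. Its proof uses that each singleton $\{u\}$ lies in $\mathcal{T}$ (forced by (T1) applied to $\{u\}$ together with (T3)). If, for instance, $A \in \mathcal{T}$ but $A \cup \{u\} \notin \mathcal{T}$, then $V \setminus (A \cup \{u\}) \in \mathcal{T}$ by (T1), so $A$, $V \setminus (A \cup \{u\})$, and $\{u\}$ are three members of $\mathcal{T}$ whose union equals $V$, violating (T2). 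Iterating the shift lemma along the boundary $U := T_X \cap T_{\compl{X}}$ and using that every set of the form $V_X \cup U'$ with $U' \subseteq U$ has $\crk$ at most $|U| \le k$ (such a set has rows zero outside the columns indexed by $U$), I would deduce the crucial equivalence $V_X \in \mathcal{T} \iff T_X \in \mathcal{T}$, where $V_X := T_X \setminus T_{\compl{X}}$.

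With the shift lemma, each tangle axiom for $\mathcal{T}'$ reduces to its counterpart for $\mathcal{T}$. For (T1), apply (T1) of $\mathcal{T}$ to $V_X$ and translate via the equivalence to get $T_X \in \mathcal{T}$ or $T_{\compl{X}} \in \mathcal{T}$. For (T2), (T2) of $\mathcal{T}$ produces $v \notin T_{X_1} \cup T_{X_2} \cup T_{X_3}$; since $v$ is non-isolated, any incident edge then lies outside $X_1 \cup X_2 \cup X_3$. For (T3), examine $X = E(G) \setminus \{e\}$ by the size of $V \setminus T_X$ (at most $2$): the size-$0$ and size-$1$ cases follow from $V \notin \mathcal{T}$ and (T3) of $\mathcal{T}$ respectively, and the size-$2$ (isolated-edge) case follows by one further (T2)-invocation on two singletons plus the doubleton's complement. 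The hypothesis $k \ge 2$ is what brings the size-$2$ case (where $\eta$ can equal $2$) fully into the scope of the membership condition $\eta(X) \le k$.

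I expect the main obstacle to be the shift lemma and the supporting cut-rank bound $\crk(V_X) \le |U|$; the ensuing verifications of the tangle axioms should then be mechanical.
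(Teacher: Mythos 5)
Your proposal is correct and is essentially the paper's own proof: after reducing to a graph with no isolated vertices, you build the identical candidate $\eta$-tangle $\{X\subseteq E(G):\eta(X)\le k,\ T_X\in\mathcal T\}$ from the $\crk$-tangle and verify (T1)--(T3) with the same ingredients (Theorem~\ref{thm:tangle}, Lemma~\ref{lem:etarho}, and the fact that vertex sets of size at most $k$ are forced into $\mathcal T$ by (T1)--(T3)), your shift lemma being a repackaging of the paper's one-shot (T2)-contradiction involving $T_X\cap T_{E(G)\setminus X}$. One small point to tighten: in the bound $\crk(V_X\cup U')\le\abs{U}$, the rows indexed by $U'$ need not vanish outside the columns of $U$, so bound the rank by the $\abs{U\setminus U'}$ columns that support the rows of $V_X$ plus the $\abs{U'}$ remaining rows.
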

\begin{proof}
  We may assume that $G$ is connected without loss of generality.
  Let $\rho$ be the \ark{} function of $G$.
  Since the \arkw{} of $G$ is larger than $k$, there exists 
  a $\rho$-tangle $\mathcal T$ of order $k+1$. 

  We aim to construct the tangle $\mathcal U$ of order $k+1$ as
  follows.
  Let \[\mathcal U=\{ X\subseteq E(G): 
  \eta(X)\le k,~ T_X\in \mathcal T\}.\] We claim that $\mathcal U$ is   an
  $\eta$-tangle of order $k+1$.

  (1) Suppose that $\eta(X)\le k$ for a set $X$ of edges. 
  We need to show that either $X\in \mathcal U$ or $E(G)\setminus
  X\in \mathcal U$. Suppose that $X\notin \mathcal U$ and
  $E(G)\setminus X\notin \mathcal U$. Then, 
  $T_X\notin \mathcal T$. Since $\rho(T_X)\le \eta(X)\le k$ and $T$ is a
  $\rho$-tangle, we know that 
  $V(G)\setminus  T_X\in \mathcal T$.
  Similarly we deduce that $V(G)\setminus T_{E(G)\setminus X}\in
  \mathcal T$.
  Moreover since $\eta(X)\le k$, $T_X\cap T_{E(G)\setminus X}\in
  \mathcal T$ (easy to show by induction---any set of at most $k$
  vertices 
  belongs to a $\rho$-tangle of order $k+1$).
  This leads a contradiction because
  $(V(G)\setminus T_X)\cup (T_X\cap T_{E(G)\setminus X})\cup
  (V(G)\setminus T_{E(G)\setminus X})=V(G)$
  and $\mathcal T$ is a $\rho$-tangle.
  
  (2) Suppose that $X\cup Y\cup Z=E(G)$ for three sets $X,Y,Z\in
  \mathcal U$.
  If $v\notin T_X\cup T_Y\cup T_Z$, then $v$ is an isolated
  vertex.
  Since $G$ is connected, there is no such $v$.
  Thus, $T_X\cup T_Y\cup T_Z=V(G)$ and 
  $T_X,T_Y,T_Z\in \mathcal T$. A contradiction.
  
  (3) For each edge $e$, $\eta(\{e\})\le 2$ and therefore
  if $k\ge 2$, then  $T_{\{e\}}\in \mathcal T$.
  So $\{e\}\in \mathcal U$.

  By (1)--(3), we checked all axioms for $\eta$-tangles.
\end{proof}
\begin{theorem}
$\Rrw(G)\le \max(\mbox{branch-width}(G),1)\le \mbox{treewidth}(G)+1$.
\end{theorem}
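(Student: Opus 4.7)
The plan is to derive the first inequality from Lemma~\ref{lem:bw} together with a direct verification in the small cases, and to invoke the classical branch-width versus tree-width bound for the second inequality.

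For the first inequality $\Rrw(G)\le\max(\mbox{branch-width}(G),1)$, I would take the contrapositive of Lemma~\ref{lem:bw}: whenever $k\ge 2$ and $\mbox{branch-width}(G)\le k$, we have $\Rrw(G)\le k$. Setting $k=\mbox{branch-width}(G)$ whenever this value is at least $2$ yields $\Rrw(G)\le \mbox{branch-width}(G)$ directly, and hence also $\Rrw(G)\le\max(\mbox{branch-width}(G),1)$.

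The residual case is $\mbox{branch-width}(G)\le 1$, where I need $\Rrw(G)\le 1$. If $|V(G)|\le 1$, both widths are $0$ by the convention that $f$-width equals $f(\emptyset)$. Otherwise, I would start from an edge branch decomposition realizing $\mbox{branch-width}(G)\le 1$ and transfer it to a vertex branch decomposition of $G$ whose cuts all have $\crk\le 1$; the bound on $\crk$ at each induced cut then follows from Lemma~\ref{lem:etarho}, which says $\crk(T_X)\le \eta(X)\le 1$. For the second inequality $\max(\mbox{branch-width}(G),1)\le \mbox{treewidth}(G)+1$, I would cite the classical theorem $\mbox{branch-width}(G)\le \mbox{treewidth}(G)+1$ of Robertson and Seymour~\cite{robertson1991graph}; since $\mbox{treewidth}(G)+1\ge 1$ for any nonempty graph, this also dominates the $\max$ with $1$.

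The main obstacle is that Lemma~\ref{lem:bw} is stated only for $k\ge 2$, so the case $\mbox{branch-width}(G)=1$ falls outside its scope. Handling it requires either extending the tangle construction of Lemma~\ref{lem:bw} one step further (with a more delicate treatment of single edges, since the step $\eta(\{e\})\le 2$ is no longer automatically within the tangle) or converting an edge branch decomposition into a vertex branch decomposition respecting the rank bound via Lemma~\ref{lem:etarho}. Either route is essentially a routine combinatorial check, but it is the only part of the argument that is not an immediate consequence of the preceding lemmas and classical citations.
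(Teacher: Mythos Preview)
Your overall plan matches the paper's: for $\mbox{branch-width}(G)\ge 2$ you invoke the contrapositive of Lemma~\ref{lem:bw}, and for the second inequality you cite Robertson--Seymour, exactly as the paper does.

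The only divergence is in the boundary case $\mbox{branch-width}(G)\le 1$. The paper handles it by the structural fact that a graph of branch-width at most~$1$ is a forest, and any forest has $\Rrw\le 1$ (every cut in the obvious branch decomposition of a tree is crossed by at most one edge). Your proposed route---transferring an edge branch decomposition to a vertex branch decomposition and applying Lemma~\ref{lem:etarho}---is not clearly well-defined: the vertex sets $T_X$ arising from the edge-cuts $X$ are not bipartitions of $V(G)$ (they miss isolated vertices and overlap at the $\eta(X)$ boundary vertices), so they do not directly yield the laminar family of cuts that a vertex branch decomposition requires. Lemma~\ref{lem:etarho} bounds $\crk(T_X)$, but the cuts of a vertex branch decomposition are $(A,\overline A)$ with $A\cup\overline A=V(G)$, and these need not equal any $T_X$. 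To make your conversion work you would end up re-deriving the forest structure anyway; the paper's one-line observation is both shorter and avoids the gap.
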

\begin{proof}
  If the branch-width of $G$ is larger than $1$, then by
  Lemma~\ref{lem:bw}, we know that the rank-width is at most the branch-width
  of $G$. 
  If the branch-width of $G$ is 1, then $G$ is a forest and therefore
  the rank-width is at most $1$.
  (But $G$ may have edges, even if branch-width of $G$ is 0
  and in this case, the rank-width of $G$ is 1.)
  Robertson and Seymour \cite{robertson1991graph} showed that 
  branch-width is at most  tree-width plus $1$.
\end{proof}
We remark that an identical proof can be used to show an analogous
result for variations of rank-width on different fields.

Figure~\ref{fig_compare} shows a comparison diagram of graph
parameters.  The idea of such a diagram is that parameterized
complexity results will propagate up and down in this diagram.
Positive results propagate upward; for instance, since \textsc{Dominating Set} is solvable in $2^{\bigoh{tw}} \poly$ for a graph of treewidth
$tw$~\cite{telle1993practical}, we see that \textsc{Dominating Set} is
solvable in $2^{\bigoh{pw}}\poly$ for a graph of pathwidth $pw$.
Negative results propagate downward; for example, since unless ETH
fails, \textsc{Dominating Set} can not be solved in $2^{o(pw)} \poly$
where $pw$ is the pathwidth of the input graph~\cite{LMS11eth}, so is
the case for treewidth, clique-width, \arkw, rank-width and
boolean-width. From this table, we can deduce that the
entire class LC-VSP cannot be in \cclass{FPT} parameterized by OCT,
D2Chordal or D2Perfect unless $\cclass{P} = \cclass{NP}$, since \textsc{Dominating Set} is \cclass{NP}-hard for both bipartite~\cite{Bertossi1986} and chordal
graphs~\cite{BJ1982}. Furthermore, LC-VSP parameterized by either of the remaining
parameters is in fact in \cclass{FPT}, since we have \cclass{FPT}
algorithms solving all problems in LC-VSP parameterized by rank-width,
boolean-width, and D2Interval\footnote{Given a graph of $\operatorname{D2Interval}(G)
  = k$, we have a fixed-parameter tractable algorithm to construct a branch deceomposition of $\nec_d$-width
  at most $2^kn^d$ and thus by the algorithm of \cite{BTV13},
  we have a fixed-parameter tractable algorithm to solve the problem.
  To do this, we first find a vertex
  set $S$ of size $k$ so that $G - S$ is an interval graph, and then
  find a branch decomposition of $\nec_d$-width at most $
  n^d$. Arbitrarily adding the vertices of $S$ anywhere in the branch
  decomposition cannot increase the $\nec_d$-width by more than
  $2^{|S|}$, and thus the resulting branch decomposition has
  $\nec_d$-width at most $2^kn^d$. We have a fixed-parameter tractable
  algorithm to find $S$ shown in \cite{DBLP:conf/soda/CaoM14} and constructing a branch
  decomposition for $G-S$ of $\nec_d$-width at most $n^d$ can be done in
  polynomial time by \cite{BV13}.}.

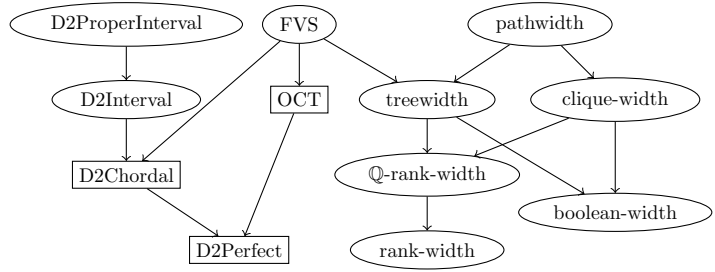
\begin{figure}
\begin{center}
\begin{tikzpicture}
\tikzstyle{every node}=[scale=.75,draw,ellipse,align = center]
\tikzstyle{every path}=[->]

\node (fvs)   at (-2.2,-1){FVS};
\node[rectangle] (oct)   at (-2.2,-2){OCT};
\node (pint)  at (-4.5,-1){D2ProperInterval};
\node (int)   at (-4.5,-2){D2Interval};
\node[rectangle] (cord)  at (-4.5,-3){D2Chordal};
\node[rectangle] (perf)  at (-3.0,-4){D2Perfect};
\node (boolw) at ( 2.0,-3.5){boolean-width};
\node (rw)    at (-0.5,-4){rank-width};
\node (cw)    at ( 2.0,-2){clique-width};
\node (pw)    at ( 1.0,-1){pathwidth};
\node (tw)    at (-0.5,-2){treewidth};
\node (qrw)   at (-0.5,-3){$\mathbb{Q}$-rank-width};

\draw (int) -- (cord);
\draw (pint) -- (int);
\draw (cord) -- (perf);
\draw (oct) -- (perf);      
\draw (fvs) -- (tw);
\draw (fvs) -- (cord);
\draw (pw) -- (tw);
\draw (tw) -- (qrw);
\draw (cw) -- (qrw);
\draw (fvs) -- (oct);
\draw (qrw) -- (rw);
\draw (tw) -- (boolw);
\draw (cw) -- (boolw);
\draw (pw) -- (cw);
\end{tikzpicture}
\caption{A comparison diagram of some graph parameters. 
  A parameter $\kappa_1$ is drawn below a parameter $\kappa_2$ if there is a
  constant $c$ such that  $\kappa_1(G) \le c\cdot {\kappa_2(G)}$ for all
  graphs $G$. The abbreviations are: FVS = Feedback Vertex Set number, OCT = Odd Cycle Transversal number, D2$\Pi$ = Vertex Deletion distance to a member of $\Pi$. 
  For the circled parameters all the LC-VSP problems are in \cclass{FPT}, and unless $\cclass{P} = \cclass{NP}$ for each of the remaining parameters at least one of the LC-VSP problems is not in \cclass{FPT}.}
\label{fig_compare}
\end{center}
\end{figure}

\section{Bounding \texorpdfstring{\necw{}}{nec-width} by
  \texorpdfstring{\arkw}{Q-rank-width} and its Algorithmic Consequences}
\label{sec:relation_Qrw_nec} 
Now we know how to find a branch decomposition with a low \arkw{}.
We are going to discuss its \necw\ to apply 
Theorem~\ref{thm:LC-VSgivenNEC}.
Theorem~\ref{thm:LC-VSgivenNEC} provides the runtime of the algorithm in terms of the
\necw\ of the given decomposition. So, if we manage to give a
bound on the \necw\ of a decomposition in terms of the \arkw{}, we will
also get a bound on the runtime of the algorithm in terms of
\arkw{}.  We will prove such a bound shortly, but in order to do this
we first need the following lemma, based on a proof of
Belmonte and Vatshelle \cite[Lemma 1]{BV13}.

\begin{lemma} \label{lemma:set-size}
  Given a positive integer $d$ and a cut $\cut{A}$ of \ark{} $k$, for every
  subset $S \subseteq A$, there exists a subset $R \subseteq S$ so that
  $\abs{R} \leq dk$ and $R \equiv_A^d S$ over the cut.
\end{lemma}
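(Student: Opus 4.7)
The plan is induction on $d$, using the following dictionary. View each $u \in A$ as a row vector $r_u \in \{0,1\}^{\compl{A}}$ of the adjacency matrix of the cut, so that $\abs{N(v) \cap X} = \sum_{u \in X} r_u(v)$ for every $X \subseteq A$. The key zero-one observation I will apply repeatedly is that whenever $r_u$ is a $\mathbb{Q}$-linear combination of vectors in $\{r_b : b \in B\}$ and $r_u(v) = 1$, some $r_b$ must satisfy $r_b(v) = 1$, since otherwise the combination would vanish at coordinate $v$. Together with the fact that any subset of rows of the adjacency matrix has $\mathbb{Q}$-rank at most $\crk(A) = k$, this lets me extract small ``covering'' subsets of $S$ using bases.

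For the base case $d = 1$, I would choose $R \subseteq S$ to be any $\mathbb{Q}$-basis of $\{r_u : u \in S\}$, which has size at most $k$. The observation immediately gives $N(v) \cap R \neq \emptyset$ iff $N(v) \cap S \neq \emptyset$ for every $v$, which is precisely $R \equiv_A^1 S$. For the inductive step, assume a set $R' \subseteq S$ with $\abs{R'} \leq (d-1)k$ and $R' \equiv_A^{d-1} S$ has been constructed, and define
\[
  W' = \braced{v \in \compl{A} : \abs{N(v) \cap R'} = d-1,\ \abs{N(v) \cap S} \geq d}.
\]
For $v \notin W'$, either $\abs{N(v) \cap R'} \geq d$ already, or $\abs{N(v) \cap S} \leq d-1$, in which case $(d-1)$-equivalence together with $R' \subseteq S$ forces $\abs{N(v) \cap R'} = \abs{N(v) \cap S}$ and hence $N(v) \cap (S \setminus R') = \emptyset$. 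So I only have to add a set $T \subseteq S \setminus R'$ with $\abs{T} \leq k$ that hits every $v \in W'$.

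Taking $T$ to be a $\mathbb{Q}$-basis of $\{r_u : u \in S \setminus R'\}$ works: for each $v \in W'$ some $u \in S \setminus R'$ has $r_u(v) = 1$, and the zero-one observation produces $b \in T$ with $r_b(v) = 1$. Setting $R := R' \cup T$ yields $\abs{R} \leq (d-1)k + k = dk$, and the case analysis above shows $R \equiv_A^d S$. The one point to be careful about, and what I would flag as the main obstacle in writing the argument cleanly, is verifying that the newly added vertices of $T$ do not spoil coordinates where $R'$ already matched $S$ exactly; this ultimately reduces to the monotonicity $\abs{N(v) \cap R'} \leq \abs{N(v) \cap S}$ coming from $R' \subseteq S$, which pins all remaining vertices of $S \setminus R'$ away from such $v$, but structuring the induction so that this is automatic rather than a case to be handled separately is what makes the proof fit together smoothly.
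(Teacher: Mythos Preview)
Your proof is correct and follows essentially the same induction-on-$d$ strategy as the paper: peel off one ``layer'' of size at most $k$ using the rank bound, and combine it with a $(d-1)$-equivalent subset of the remainder. The only cosmetic differences are that the paper adds the size-$k$ layer first (taking $S_1 \equiv_A^1 S$ and then applying induction to $S\setminus S_1$) whereas you add it last, and the paper obtains the size-$k$ set via a minimality argument rather than directly picking a $\mathbb{Q}$-basis; both variants yield the same bound with the same verification.
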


\begin{proof}
  We proceed by induction on $d$. If 
  $d = 1$, then let $S'$ be a minimal subset of $S$ so that $S'
  \equiv_A^1 S$. %
  Since $S'$ is minimal, removing any vertex of $S'$ will decrease
  $\abs{N(S')}$. Therefore, every vertex of $S'$ is adjacent to at
  least one vertex that none of the other vertices in $S'$ are
  adjacent to. In the adjacency matrix $M$ of $\cut{A}$, this means
  that each of the corresponding rows of $S'$ has a $1$ in a column
  where all the other rows of $S'$ has a $0$. Hence, the rows of $S'$
  are linearly independent and so $\abs{S'} \leq \crk(A) = k$.

  So we may assume that $d>1$.
  By the above, 
  there exists a subset $S_1 \subseteq S$ such that $\abs{S_1}\le k$ and
  $S_1 \equiv_A^1 S$. 
  By the induction
  hypothesis, there exists a set $S_2 \subseteq (S \setminus S_1)$ so
  that $S_2 \equiv_A^{d-1} (S \setminus S_1)$ and $\abs{S_2} \leq (d-1) k$.  

  We claim that $S_1\cup S_2 \equiv_A^{d} S$.
  Let $v\in \compl{A}$. 
  We may assume that $v$ has at most $d-1$ neighbours in $S_1\cup S_2$.

  If $v$ has a neighbour in $S\setminus (S_1\cup S_2)$, then 
  $\abs{ N(v) \cap  (S \setminus S_1) } > \abs{ N(v) \cap S_2 }$
  and therefore $v$ has at least $d-1$ neighbours in $S_2$ and so $v$ has
  no neighbors in $S_1$.
  This contradicts our assumption that $S_1\equiv_A^1 S$.

  Thus $v$ has no neighbour in $S\setminus (S_1\cup S_2)$. This proves
  the claim.
  Since $\abs{S_1\cup S_2}\le dk$, this completes the proof of the lemma.
\end{proof}

Lemma~\ref{lemma:set-size} implies that to count 
distinct $d$-neighbour equivalence classes for a cut
of a branch decomposition of \arkw\ $k$,
it is enough to 
search subsets of size at
most $dk$.
The same result is true, even if we replace \arkw\ with rank-width or
boolean-width (\cite{martinthesis}, \cite[Lemma 5]{BTV13}).

Then what is the contribution of \arkw\ instead of rank-width or boolean-width?
Here comes the crucial difference.
For both rank-width $k$ or boolean-width $k$, the number of vertices with
distinct neighbourhoods over the cut is no more than $2^k$
\cite{martinthesis,BTV13}.  Putting this together gives a trivial bound of
$\nec_d \leq 2^{d  k^2}$. 
We can improve this bound if $k$ is \arkw, 
thanks to the fact that the row space of
some matrix over $\mathbb Q$ not only contains all the rows of
the matrix, but also all the different sums of the rows in the
matrix. So, we can bound $\nec_d(A)$ by using a more direct connection between
\arkw\ and the number of distinct $d$-neighbourhoods than that of the trivial
bound.

\begin{theorem} \label{thm:NewParamToNEC}
  If the \arkw{} of a branch decomposition is $k$, then the \necw\ of
  the same decomposition is no more than $(dk+1)^{k}=2^{k \log_2 {(d  k+1)}}$.
\end{theorem}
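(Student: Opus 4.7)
The plan is to combine Lemma~\ref{lemma:set-size} with an explicit basis of the column space of the cut's adjacency matrix over $\mathbb{Q}$. The crucial advantage of working over the rational field is that a $k$-dimensional basis of columns lets one recover \emph{exact} neighbourhood counts from their values on the basis, so every $d$-equivalence class is pinned down by a $k$-tuple of integers in $\{0,1,\dots,dk\}$, yielding the bound $(dk+1)^k$.

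Fix a cut $\cut{A}$ with $\crk(A)=k$ and let $M$ be its $A\times \compl{A}$ adjacency matrix over $\mathbb{Q}$. Since the column space of $M$ has dimension $k$, choose vertices $v_1,\dots,v_k\in \compl{A}$ whose columns $c_{v_1},\dots,c_{v_k}$ form a basis. For $S\subseteq A$, define $\tau(S) := (\abs{N(v_i)\cap S})_{i=1}^{k}$. Since every other column $c_v$ with $v\in\compl{A}$ is a $\mathbb{Q}$-linear combination $c_v=\sum_i \alpha_i c_{v_i}$, we have
\[
\abs{N(v)\cap S}\;=\;\mathbf{1}_S^{\top}c_v\;=\;\sum_i \alpha_i\,\abs{N(v_i)\cap S},
\]
so $\tau(S)$ determines the full neighbourhood-count vector $(\abs{N(v)\cap S})_{v\in\compl{A}}$. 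In particular $\tau(S_1)=\tau(S_2)$ forces $S_1\equiv_A^d S_2$.

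It remains to bound the number of attainable values of $\tau$. Lemma~\ref{lemma:set-size} guarantees that every class of $\equiv_A^d$ contains a representative $R$ with $\abs{R}\le dk$; for any such $R$, every coordinate of $\tau(R)$ is a non-negative integer bounded by $\abs{R}\le dk$, so $\tau(R)\in\{0,1,\dots,dk\}^k$. Since distinct classes must produce distinct values of $\tau$, the number of $d$-equivalence classes on this cut is at most $(dk+1)^k$, and as this holds for every cut of the decomposition, the stated bound on the $\nec_d$-width follows. The only mildly subtle point is the first step---using that over $\mathbb{Q}$ a basis of columns recovers \emph{exact} counts, not merely $d$-truncated ones---which is precisely what distinguishes $\mathbb{Q}$-rank-width from ordinary rank-width/boolean-width and turns the trivial $2^{dk^2}$ bound into the much tighter $2^{k\log_2(dk+1)}$.
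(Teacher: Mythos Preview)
Your proof is correct and essentially identical to the paper's: both choose $k$ linearly independent columns of the cut's adjacency matrix, use that over $\mathbb{Q}$ these $k$ coordinates determine the full neighbourhood-count vector, and then invoke Lemma~\ref{lemma:set-size} to force each coordinate of a small representative into $\{0,\dots,dk\}$. The only cosmetic difference is notation---the paper first defines the full sum-of-rows vector $\sigma(S)$ and then restricts it to a column basis $B$, whereas you define the restricted $k$-tuple $\tau(S)$ directly.
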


\begin{proof}
  It is enough to prove that if a cut $\cut{A}$ has \ark{} $k$, 
  then 
  $\nec_d(A)\le (dk+1)^k$.
  Let $M$ be the $A\times \compl{A}$ adjacency matrix of the cut
  $\cut{A}$ over $\mathbb{Q}$. 

  For a subset $S$ of $A$, 
  let $\sigma(S)$ be the sum of the row vectors of $M$ corresponding to $S$. 
  If $\sigma(S) = \sigma(S')$ then $S \equiv_A^d S'$
  for all $d$, because the entries of $\sigma(S)$ represent the number
  of neighbours in $S$ for each vertex in $\compl{A}$.

  By Lemma~\ref{lemma:set-size}, each equivalence class of
  $\equiv_A^d$ can be represented by a subset $S$ of $A$ having at most
  $dk$ vertices.
  Notice that for such $S$, each entry of $\sigma(S)$ is in $\{0,1,2,\ldots,dk\}$.

  Let $B$ be a set of $k$ linearly independent columns of $M$.
  Since $M$ has rank $k$, 
  every linear combination of row vectors of $M$ is completely determined by its
  entries in $B$.
  Thus the number of possible values of $\sigma(S)$ is at most $(dk+1)^k$ (see Figure~\ref{fig:rowSum}).
      This proves the theorem.
\end{proof}

  \begin{figure}[h!]
    \centering
    \includegraphics{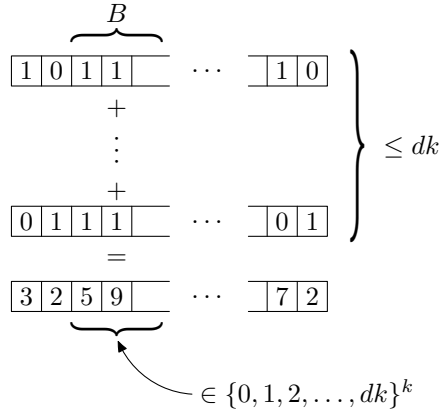}
    \caption{As described by Theorem~\ref{thm:NewParamToNEC}, we can determine the sum of the vectors by looking at the values in the columns $B$. As we sum over at most $dk$ rows, and each row either increases the value of a column by exactly one or exactly zero, the number of unique sums possible is at most $|\{0,1,\ldots\}|^{|B|} = (1+dk)^k$.}
    \label{fig:rowSum}
  \end{figure}

This result, combined with Theorems~\ref{thm:LC-VSgivenNEC} and
\ref{thm:approximate-Q-rw}, shows that all the LC-VSP problems can be
solved in time $\bigohstar{2^{\bigoh{k \log{k}}}}$. Expressing the
runtime in terms of clique-width, we get the following corollary.

\begin{corollary}
  Every LC-VSP problem $\pi$ on $n$-vertex graphs of clique-width $\cw$
  can be solved in $\bigohstar{2^{\bigoh{\cw \log{(\cw \mult{}d(\pi))}q(\pi)}}}$-time.
\end{corollary}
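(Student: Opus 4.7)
The plan is to assemble the three main ingredients already developed in the paper: the approximation algorithm for \arkw{} (Theorem~\ref{thm:approximate-Q-rw}), the bound on \necw[d(\pi)] in terms of \arkw{} (Theorem~\ref{thm:NewParamToNEC}), and the dynamic programming algorithm of Bui-Xuan et al.\ (Theorem~\ref{thm:LC-VSgivenNEC}). The corollary is then essentially a matter of composing these three results and bookkeeping the exponents.

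First, I would invoke the approximation algorithm of Theorem~\ref{thm:approximate-Q-rw} to obtain a branch decomposition of \arkw{} at most $3\cw+1$. Since Lemma~\ref{lemma:NewParamLEQcw} tells us that $\Rrw(G) \le \cw(G)$, the algorithm is guaranteed to succeed when called with parameter $k=\cw$. As $\cw$ is not known a priori, I would run the algorithm for $k=1,2,3,\ldots$ until a decomposition is returned; since the costs form a geometric sum dominated by the last call, the total time remains $\bigohstar{2^{3\cw}}$.

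Next, I would apply Theorem~\ref{thm:NewParamToNEC} to the resulting decomposition with $d=d(\pi)$. Plugging in the approximation bound, the \necw[d(\pi)] of the decomposition is at most
\[
  (d(\pi)(3\cw+1)+1)^{3\cw+1} \;=\; 2^{(3\cw+1)\log_2(d(\pi)(3\cw+1)+1)} \;=\; 2^{\bigoh{\cw \log(\cw \mult d(\pi))}}.
\]
Finally, feeding this decomposition into the algorithm of Theorem~\ref{thm:LC-VSgivenNEC} yields a runtime of $\bigoh{n^4 \mult q(\pi) \mult k^{3q(\pi)}}$ where $k$ is the \necw[d(\pi)] bound above; substituting gives a total runtime of $\bigohstar{2^{\bigoh{\cw \log(\cw \mult d(\pi)) \mult q(\pi)}}}$, which dominates the $\bigohstar{2^{3\cw}}$ preprocessing cost and matches the claimed bound.

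There is no real obstacle beyond the bookkeeping. The only subtle point is that the approximation factor $3$ and the additive $+1$ from Theorem~\ref{thm:approximate-Q-rw}, together with the $+1$ inside the logarithm from Theorem~\ref{thm:NewParamToNEC}, must be absorbed into the big-$O$ inside the exponent; this is automatic since $\log(d(\pi)(3\cw+1)+1)=\bigoh{\log(\cw \mult d(\pi))}$ whenever $\cw \mult d(\pi)\ge 2$, and the tiny boundary cases ($\cw\le 1$ or $d(\pi)=0$) are trivially handled.
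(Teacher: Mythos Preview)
Your proposal is correct and follows essentially the same route as the paper: the paper likewise composes Theorem~\ref{thm:approximate-Q-rw}, Theorem~\ref{thm:NewParamToNEC}, and Theorem~\ref{thm:LC-VSgivenNEC}, and finishes by invoking Lemma~\ref{lemma:NewParamLEQcw} to replace $\Rrw(G)$ by $\cw$. The only cosmetic difference is that the paper phrases the argument with $k=\Rrw(G)$ and bounds $k\le\cw$ at the end, whereas you make the iterative search for the unknown parameter explicit; both lead to the same exponent after absorbing constants into the big-$O$.
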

\begin{proof}
  Let $k$ be the \arkw{} of $G$.
  By Theorem~\ref{thm:approximate-Q-rw} we can find a
  branch decomposition of \arkw\ at most $3k+1$ in time $\bigohstar{2^{3k}}$.
 By Theorems \ref{thm:LC-VSgivenNEC} and \ref{thm:NewParamToNEC}, the
 LC-VSP problem $\pi$ can be solved in time $\bigohstar{2^{9k \log{(3k \mult d(\pi) + 1)}q(\pi)}}$.
 This completes the proof because $k \leq \cw$ by Lemma~\ref{lemma:NewParamLEQcw}.
\end{proof}

\section{Conclusion} \label{sec:conclusion} 
If we are given a $k$-expression as input, the best known \cclass{FPT}
algorithm parameterized by $k$ solving the \textsc{Dominating Set} is by
Bodlaender et al.~\cite{BLRV10} and runs in time $\bigohstar{4^{k}}$.
However, it is currently open whether we can construct a
$\bigoh{k}$-expression of an input graph of clique-width at most $k$ in
polynomial time.  We have shown the existence of algorithms with
runtime $2^{\bigoh{\cw \log{\cw}}} \poly$ for all LC-VSP problems,
without assuming that a $k$-expression is given as an input.  This
still leaves the natural open question:

\begin{open}
 Can \pname{Independent Set} or \pname{Dominating Set} be solved in $\bigohstar{2^{\bigoh{\cw}}}$ time, where $\cw$ is the clique-width of the graph?
\end{open}

We know that for a graph of treewidth $tw$, \textsc{Independent Set}
can be solved in time $2^{\bigoh{tw}} \poly$ time. This leads us to an interesting
question of what parameters give a single exponential runtime for
\pname{Independent Set}. Two such parameters are the Vertex Deletion
Distance to Proper Interval graphs (D2ProperInterval) and the Odd Cycle
Transversal number (OCT number):
\begin{enumerate}
\item For a graph $G$, the D2ProperInterval of a graph is the minimum
  number of vertices needed to be removed in order to make $G$ into a
  proper interval graph. For a graph $G$ with D2ProperInterval equal
  $k$, Villanger and van 't Hof \cite{HV13} gave a $6^{k}\poly$-time
  algorithm for finding such a set $S$ to be removed.  To solve \textsc{Independent Set} on a graph $G = (V,E)$, we guess the intersection
  $S'$ of $S$ and an optimal solution, and then combining it with the
  optimal solution of \textsc{Independent Set} on the proper interval
  graph $G- (S \cup N(S'))$.  As \textsc{Independent Set} is solvable in
  $\poly$ time on proper interval graphs, this yields a
  $2^{\bigoh{k}}\poly$ time algorithm.

\item The OCT number of a graph $G$ is the minimum number of vertices
  needed to remove from $G$ in order to make it bipartite. For a graph
  $G$ with OCT number equal $k$, Lokshtanov, Saurabh and
  Sikdar~\cite{lokshtanov2009simpler} gave a $\bigohstar{3^k}$-time
  algorihm for finding the minimum set $S$ of vertices to remove
  from $G$ to make it bipartite. As with the algorithm above, we can
  solve \pname{Independent Set} by guessing the intersection $S'
 $ of $S$ and the optimal solution and then combine it
  with the optimal solution of the bipartite graph $G- (S
  \cup N(S'))$. As \pname{Independent Set} is trivially solvable in
  $\poly$
  time on bipartite graphs, this yields
  a $2^{\bigoh{k}}\poly$ time algorithm.
\end{enumerate}
Note, however, that these parameters are not bounded by treewidth (and
thus also not bounded by clique-width), see Figure~\ref{fig_compare}.

\end{document}